\documentclass[conference]{IEEEtran}
\IEEEoverridecommandlockouts
\usepackage{cite}
\usepackage{amsmath,amssymb,amsfonts}
\usepackage{graphicx}
\usepackage{textcomp}
\usepackage{tikz}
\usepackage{xcolor}
\usepackage{multirow}
\usetikzlibrary{arrows.meta,calc}



\usepackage{amsthm}

\usepackage{booktabs}
\usepackage{caption}
\usepackage{cite}
\usepackage{colortbl}
\usepackage{pdfpages}
\usepackage{dsfont}

\usepackage{bm}
\usepackage{bbm}
\usepackage{dirtytalk}

\usepackage{subcaption}

\newtheorem{theorem}{Theorem}

\newtheorem{lemma}{Lemma}

\theoremstyle{definition}

\newtheorem{definition}{Definition}


\theoremstyle{definition}
\newtheorem{remark}{Remark}

\theoremstyle{definition}


\definecolor{DarkGreen}{rgb}{0.1,0.5,0.1}
\definecolor{DarkRed}{rgb}{0.5,0.1,0.1}
\definecolor{DarkBlue}{rgb}{0.1,0.1,0.5}
\definecolor{DarkPurple}{rgb}{0.5,0.2,0.5}
\definecolor{DarkTurquoise}{rgb}{0.1,0.5,0.5}


\newcolumntype{P}[1]{>{\centering\arraybackslash}p{#1}}



\newcommand\blfootnote[1]{%
  \begingroup
  \renewcommand\thefootnote{}\footnote{#1}%
  \addtocounter{footnote}{-1}%
  \endgroup
}

\DeclareMathOperator{\tr}{tr}


\usepackage{booktabs}
\usepackage{diagbox}

\def\BibTeX{{\rm B\kern-.05em{\sc i\kern-.025em b}\kern-.08em
    T\kern-.1667em\lower.7ex\hbox{E}\kern-.125emX}}
\begin{document}

\title{\textit{HerA} Scheme: Secure Distributed Matrix Multiplication via Hermitian Codes\\
}

\author{\IEEEauthorblockN{Roberto A. Machado} \IEEEauthorblockA{School of Mathematical and Statistical Sciences\\ Clemson University\\ Clemson, SC 29634 USA\\rmachad@clemson.edu}
\and
\IEEEauthorblockN{Gretchen L. Matthews 
}\IEEEauthorblockA{Department of Mathematics\\
Virginia Tech\\
Blacksburg, VA 24061 USA\\
gmatthews@vt.edu}
\and
\IEEEauthorblockN{Welington Santos} \IEEEauthorblockA{Department of Mathematics\\
Virginia Tech\\
Blacksburg, VA 24061 USA\\
welington@vt.edu}

}

\maketitle

\blfootnote{The work of Gretchen L. Matthews was partly supported by NSF DMS-2037833, NSF DMS-2201075, and the Commonwealth Cyber Initiative.}
\begin{abstract}

We consider the problem of secure distributed matrix multiplication (SDMM), where a user has two matrices and wishes to compute their product with the help of $N$ honest but curious servers under the security constraint that any information about either $A$ or $B$ is not leaked to any server. This paper presents a \emph{new scheme} that considers the inner product partition for matrices $A$ and $B$.
Our central technique relies on encoding matrices $A$ and $B$ in a Hermitian code and its dual code, respectively. We present the Hermitian Algebraic (HerA) scheme, which employs Hermitian codes and characterizes the partitioning and security capacities given entries of matrices belonging to a finite field with $q^2$ elements. We showcase that this scheme performs the secure distributed matrix multiplication in a significantly smaller finite field and expands security allowances compared to the existing results in the literature. 
\end{abstract}
\begin{IEEEkeywords}
secure multi-party computation, distributed computation, Hermitian codes
\end{IEEEkeywords}
\vspace{-10pt}

\section{Introduction}

Matrix multiplication is an essential back-end operation of numerous applications in signal processing and machine learning. When facing applications involving massive matrices, matrix multiplication in a single computer is slow, and distributed solutions need to be adopted. In such a scenario, the goal is to speed up the computational time to perform the matrix multiplication. Thus, the multiplication task is divided into smaller sub-tasks distributed across dedicated workers. 

The setting for the problem considered in this paper is as follows. A user has two matrices, $A \in \mathbb{F}^{a \times b}_{q^{2}}$ and $B \in \mathbb{F}^{b \times c}_{q^{2}}$, and wishes to compute their product, $AB \in \mathbb{F}_{q^2}^{a \times c}$, with the assistance of $N$ servers, without leaking any information about either $A$ or $B$ to any server.  We assume that all servers are honest but curious (passive) in that they are not malicious and will follow the pre-agreed-upon protocol. However, any $T$ may collude to eavesdrop and deduce information about either $A$ or $B$.

We follow the  setting proposed in~\cite{ravi2018mmult}, with many follow-up works~\cite{Kakar2019OnTC,koreans,d2019gasp,DOliveira2019DegreeTF, Aliasgari2019DistributedAP,aliasgari2020private,Kakar2019UplinkDownlinkTI,doliveira2020notes,Yu2020EntangledPC,mital2020secure,bitar2021adaptive, hasircioglu2021speeding, danielgretchen, machado, 9965858, 9004505, bivariatepol}. The performance metric initially used was the download cost, i.e., the total amount of data downloaded by the users from the server. Subsequent work has also considered the upload cost \cite{mital2020secure}, the total communication cost \cite{9004505, machado}, and computational costs \cite{doliveira2020notes, 9965858}.

Different partitionings of the matrices lead to different trade-offs between upload and download costs. In this paper, we consider the inner product partitioning given by $A = \begin{bmatrix}A_1 & \cdots & A_L\end{bmatrix}$ and $B^{\intercal} = \begin{bmatrix}B_1^{\intercal} & \cdots & B_L^{\intercal}\end{bmatrix}$ such that $AB = A_1 B_1 + \cdots + A_L B_L$, where all products $A_\ell B_\ell$ are well-defined and of the same size. Under this partitioning, a polynomial code is a polynomial $h(x,y)=f(x,y) \cdot g(x,y)$, whose coefficients encode the sub-matrices $A_kB_\ell$. The $N$ servers compute the evaluations $h(P_1),\ldots,h(P_N)$ for certain points $P_1, \ldots, P_N$ in an Hermitian curve. The servers send these evaluations to the user. The two-variable polynomial $h(x,y)$ is constructed to ensure that no $T$-subset of evaluations reveals any information about $A$ or $B$ ($T$-security), and the user can reconstruct $AB$ given all $N$ evaluations $h(P_1)$, $\ldots$, $h(P_N)$(decodability).

Examples of polynomial schemes using the inner product partitioning are the secure MatDot codes in \cite{Aliasgari2019DistributedAP}, the DFT-codes in~\cite{mital2020secure}, and the FTP codes \cite{machado}. Some authors started exploring two-variable polynomials in the context of secure distributed matrix multiplication using outer product partitioning, \cite{bivariatepol, hasircioglu2021speeding}. One of the literature's main focuses was minimizing the minimum amount of helping servers $N$, also known as the recovery threshold, to reduce the communication cost. In \cite{machado}, Machado \textit{et al.} presented a scheme to reduce the total communication by contacting more servers. Most of the constructions rely on large finite fields and even extensions of finite fields. This paper investigates the partitioning and security capacities given matrices $A$ and $B$ have entries in $\mathbb{F}_{q^2}$, a finite field with $q^2$ elements.

We present the Hermitian Algebraic (\textit{HerA}) scheme, a two-variable polynomial scheme inspired by Algebraic Codes in secret sharing schemes literature, specifically the Algebraic Codes for Secret Sharing Schemes were first introduced in \cite{umbertosecsharing}, a protocol close to optimal communication efficiency and robust security with lengths not bounded by the field size. When employing this scheme to the secure matrix multiplication problem, matrix $A$ should be encoded in a Hermitian code while matrix $B$ is encoded in its dual. Therefore, the recovery threshold is allowed to be larger than the field's size ${q^2}$,  which no other polynomial scheme could achieve.

\begin{figure}[!t]
\centering
\begin{tikzpicture}
\tikzset{
myline/.style={-{Latex[length=2mm, width=1.3mm]}},
mynode/.style={pos=.5,fill=white,draw,text width=24mm,align=center,inner sep=1pt}
}

\node[draw] (U1) at (0,0) {\small User};
\node[draw] (S1) at (4,2.1) {\small Server 1};
\node[draw] (S2) at (4,0.7) {\small Server 2};
\node[draw] (S3) at (4,-0.7) {\small Server 3};
\node[draw] (S4) at (4,-2.1) {\small Server 4};
\node[draw] (U2) at (8,0) {\small User};
\coordinate (BL1) at ($(U1 |- S4)+(-0.5,0)$);
\coordinate (TR1) at ($(S1)+(-0.8,0)$);
\coordinate (BL2) at ($(S4)+(0.8,0)$);
\coordinate (TR2) at ($(U2 |- S1)+(0.5,0)$);

\draw[myline] (U1.east) -- (S1.west) node[mynode] {\footnotesize$f(P_{1\delta}), g(P_{1\delta})$};
\draw[myline] (U1.east) -- (S4.west) node[mynode] {\footnotesize$f(P_{\delta\delta^2}), g(P_{\delta\delta^2})$};
\draw[myline] (U1.east) -- (S2.west) node[mynode] {\footnotesize$f(P_{\delta^2\delta}), g(P_{\delta^2\delta})$};
\draw[myline] (U1.east) -- (S3.west) node[mynode] { \footnotesize$f(P_{\delta^2\delta^2}), g(P_{\delta^2\delta^2})$};

\draw[myline] (S1.east) -- (U2.west) node[mynode] {\footnotesize$h(P_{1\delta})$};
\draw[myline] (S2.east) -- (U2.west) node[mynode] {\footnotesize$h(P_{\delta\delta^2})$};
\draw[myline] (S4.east) -- (U2.west) node[mynode] {\footnotesize$h(P_{\delta^2\delta})$};
\draw[myline] (S3.east) -- (U2.west) node[mynode] {\footnotesize$h(P_{\delta^2\delta^2})$};

\draw[dotted] (BL1) rectangle (TR1);
\draw[dotted] (BL2) rectangle (TR2);
\node[above] at ($(BL1)!0.5!(BL1 -| TR1)$) {Upload Phase};
\node[above] at ($(BL2)!0.5!(BL2 -| TR2)$) {Download Phase};
\end{tikzpicture}
\caption{An example of the \textit{HerA} Scheme detailed in Section \ref{sec3}. The user computes carefully chosen evaluations of the two-variable polynomials $f(x,y)$ and $g(x,y)$ and uploads them to the servers. This allows for  reducing the minimum required size of a finite field. Then, each server computes the product of their received evaluations, which is itself an evaluation of the polynomial $h(x) = f(x) \cdot g(x)$ and sends it back to the user who can decode $AB$.}\label{fig:phase1and2}
\end{figure}
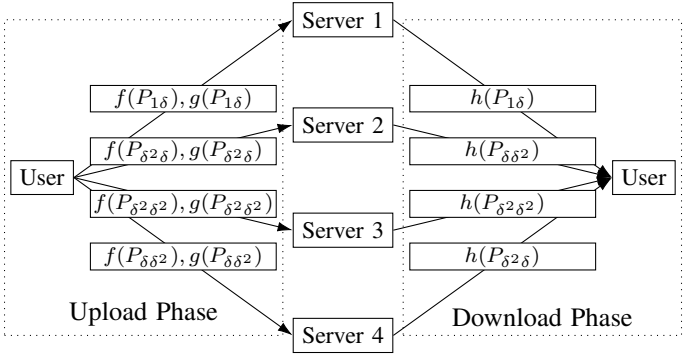


\begin{theorem}\label{theo:scheme}
Let $L$ and $T$ be positive integers. Let $A \in \mathbb{F}_{q^2}^{a \times b}$, $B \in \mathbb{F}_{q^2}^{b \times c}$ be two matrices and suppose the $T$-MDS condition is fulfilled. 
Then, there exists a \textit{HerA} scheme with partitioning parameter $L$ and security parameter $T$, which securely computes $AB\in\mathbb{F}_{q^2}^{a\times c}$ utilizing $L+2T$ servers with a total communication rate of
\begin{align} \label{eq: FTP rate}
\mathcal{R} = \left(\frac{Nb}{L} \left( \frac{1}{a}+\frac{1}{c} \right)+N\right)^{-1} .
\end{align}
\end{theorem}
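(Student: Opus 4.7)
The plan is to build the scheme explicitly from evaluations on the Hermitian curve $\mathcal{H}: y^q+y=x^{q+1}$ over $\mathbb{F}_{q^2}$, then verify decodability, $T$-security, and the rate formula. First, I would select $N=L+2T$ affine rational points $P_1,\ldots,P_N$ on $\mathcal{H}$ together with a monomial basis of the one-point Riemann--Roch spaces $\mathcal{L}(\alpha Q_\infty)$. Partitioning $A=[A_1\,\cdots\,A_L]$ and $B^\intercal=[B_1^\intercal\,\cdots\,B_L^\intercal]$, I form
\begin{align*}
f(x,y) &= \sum_{\ell=1}^{L} A_\ell\,\phi_\ell(x,y) + \sum_{t=1}^{T} R_t\,\phi_{L+t}(x,y),\\
g(x,y) &= \sum_{\ell=1}^{L} B_\ell\,\psi_\ell(x,y) + \sum_{t=1}^{T} S_t\,\psi_{L+t}(x,y),
\end{align*}
where $\{\phi_i\}$ spans the Hermitian code $C$ used for $A$ and $\{\psi_j\}$ spans a fixed set of basis functions supporting the dual code $C^{\perp}$ used for $B$; the masking matrices $R_t$ and $S_t$ are sampled uniformly from $\mathbb{F}_{q^2}^{a\times b/L}$ and $\mathbb{F}_{q^2}^{b/L\times c}$. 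The user sends $(f(P_i),g(P_i))$ to server $i$, and each server returns $h(P_i):=f(P_i)g(P_i)$.

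For decodability I would read off the coefficient of a distinguished monomial of $h=f\cdot g$. Because pole orders at $Q_\infty$ add under multiplication, the bases $\{\phi_i\}$ and $\{\psi_j\}$ can be chosen so that exactly the $L$ diagonal products $\phi_\ell\psi_\ell$ share one prescribed pole order (whose aggregate coefficient is precisely $\sum_\ell A_\ell B_\ell = AB$), while every other cross product contributes to a different basis monomial in the expansion of $h$. Since $h$ lies in a one-point Riemann--Roch space $\mathcal{L}(\mu Q_\infty)$ of dimension at most $L+2T$, and by construction the evaluation map is injective on this space when restricted to our $N=L+2T$ points, the user interpolates $h$ from the $N$ returned values and extracts $AB$ from the target coefficient. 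The main obstacle will be the algebraic-geometric bookkeeping: verifying that the chosen bases of $C$ and $C^{\perp}$ multiply together so as to both isolate $AB$ and keep the pole order of $h$ small enough to be recovered from only $L+2T$ evaluations---this is exactly where the Hermitian-code/dual-code pairing does the work.

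For $T$-security I would invoke the $T$-MDS hypothesis directly: the $T\times T$ submatrix obtained by evaluating $\phi_{L+1},\ldots,\phi_{L+T}$ at any $T$-subset $\mathcal{T}$ of collusion points is invertible, and likewise for $\psi_{L+1},\ldots,\psi_{L+T}$. Consequently, for any fixed realization of $(A,B)$ the uniform random masks $R_t,S_t$ push forward to a uniform distribution on $\{(f(P_i),g(P_i))\}_{i\in\mathcal{T}}$, giving $I(A,B;\{f(P_i),g(P_i)\}_{i\in\mathcal{T}})=0$. Finally, the rate follows by a direct symbol count: each server receives $f(P_i)\in\mathbb{F}_{q^2}^{a\times b/L}$ and $g(P_i)\in\mathbb{F}_{q^2}^{b/L\times c}$ (a total of $b(a+c)/L$ symbols uploaded) and returns $h(P_i)\in\mathbb{F}_{q^2}^{a\times c}$ ($ac$ symbols downloaded), so dividing the $ac$ useful symbols by $N\bigl(b(a+c)/L+ac\bigr)$ reproduces \eqref{eq: FTP rate} exactly.
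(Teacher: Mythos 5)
Your security argument and rate count line up with the paper's (Lemma~\ref{lem:tsecure} likewise uses the $T$-MDS full-rank condition to show the masked evaluations are uniform, and the symbol count is identical), but your decodability step is a genuinely different route and it does not work at the claimed threshold. You propose to interpolate $h=fg$ from the $N=L+2T$ responses and read off a distinguished coefficient. In the paper's construction this is impossible: $f\in\mathcal{L}(mP_\infty)$ with $m=L+T+\frac{q(q-1)}{2}-1$, but $g$ must lie in $\mathcal{L}(m^{\perp}P_\infty)$ with $m^{\perp}=q^3+q^2-q-2-m$, a space of dimension roughly $q^3-(L+T)$, so $h$ lives in (essentially) all of $\mathbb{F}_{q^2}^{q^3}$ after evaluation and cannot be determined by $L+2T$ values. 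Even if you instead keep $f$ and $g$ in two small spaces of dimension $L+T$ each, the span of the pairwise products $\phi_k\psi_\ell$ generically has dimension on the order of $2(L+T)-1$, which is the MatDot-type threshold; an interpolation-plus-coefficient-extraction argument therefore cannot reach $N=L+2T$ for $L\geq 2$. The ``algebraic-geometric bookkeeping'' you defer is not bookkeeping --- it is the obstruction.

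The paper avoids interpolation entirely. Since $(f(P_1),\ldots,f(P_{q^3}))\in\mathcal{C}(mP_\infty)$ and $(g(P_1),\ldots,g(P_{q^3}))\in\mathcal{C}(mP_\infty)^{\perp}=\mathcal{C}(m^{\perp}P_\infty)$, orthogonality gives
\begin{equation*}
\sum_{i=1}^{q^3} f(P_i)g(P_i)=0,
\end{equation*}
and $g$ is additionally constrained to vanish at all $q^3-2(L+T)$ points other than $P_1,\ldots,P_{L+T}$ and the last $L+T$ points. The sum then collapses to the $L$ desired terms $\sum_{i=1}^{L}A_iB_i$ plus exactly $L+2T$ other products $f(P_i)g(P_i)$, each computed by one server; the user recovers $AB$ by simply summing the negated responses. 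Your proposal is missing both this vanishing constraint on $g$ and the use of the dual-code inner-product identity, which together are what make $N=L+2T$ achievable.
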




\subsection{Related Work}

For distributed computations, polynomial codes were initially introduced in~\cite{polycodes1} to mitigate stragglers in distributed matrix multiplication. A series of works followed this,~\cite{polycodes2,pulkit,pulkit2,fundamental}.

The literature on SDMM has also studied different variations of the model we focus on here. For instance, in~\cite{nodehi2018limited,jia2019capacity,mital2020secure,akbari2021secure}, the encoder and decoder are considered to be separate parties, in~\cite{nodehi2018limited} servers are allowed to cooperate, and in~\cite{kim2019private} the authors consider a hybrid between SDMM and private information retrieval where the user has a matrix $A$ and wants to privately multiply it with a matrix $B$ belonging to some public list. \textit{HerA} codes can be readily used or adapted to many of these settings as done with other polynomial schemes (e.g.,~\cite{bitar2021adaptive, zhu2021improved}).

Algebraic curves, including Hermitian ones, have gained some attention in secret sharing and fractional decoding (e.g., \cite{senthoor2022concatenating, umbertosecsharing, cheraghchi2019nearly, matthews2021fractional, murphy2022codes }). The field trace method relevant to us was developed in \cite{umbertosecsharing} and later extended to Communication Efficient Quantum Secret Sharing \cite{senthoor2022concatenating}.

\subsection{Main Contributions}

Our main contributions are summarized below.

\begin{itemize}

    \item We present a \textit{new} polynomial encoding scheme (called \textit{HerA} scheme) for the secure distributed matrix multiplication problem, considering the inner product partition and rational functions in the Riemann-Roch space associated with a divisor of the Hermitian curve. \textit{HerA} scheme allows for performing $AB$ in a smaller finite field compared to the state of art in the literature.
    
    \item By carefully selecting a divisor $mP_\infty$, the matrix $A$ is encoded in a Hermitian code while matrix $B$ is encoded in its dual code, which is, by construction, also a Hermitian code. This allows us to use the inner product property of dual codes and achieve a recovery threshold of $L+2T$, the same as state of the art in the literature, Theorem~\ref{theo:scheme}.

    \item As we illustrated in the examples, in Sections \ref{sec3} and \ref{sec6}, \textit{HerA} can perform information-theoretic secure distributed matrix multiplication over finite fields smaller or equal to the recovery threshold.

\end{itemize}

\section{Preliminaries}

This section introduces some basic notation and the main results in Hermitian codes needed for the rest of the paper. For example, we define $[M,N] = \{M,M+1,\ldots,N\}$ and $[M] = [1,M]$.

We record some facts about Hermitian codes from \cite{Stichtenoth2009AlgebraicFF}.

\noindent For a prime power $q$, let $\mathcal{H}_{q}$ denote the smooth, projective curve given by $y^{q}+y=x^{q+1}$ over the finite field $\mathbb{F}_{q^2}$. The genus of $\mathcal{H}_{q}$ is $g=\frac{q(q-1)}{2}$, and there are $q+1$ distinct $\mathbb{F}_{q^2}$-rational places.

Let $P_{\infty},P_{1},\ldots,P_{n}$ be the $n+1$ distinct $\mathbb{F}_{q^2}$-rational places so that $n=q^3$. Given $\alpha\in\mathbb{F}_{q^2}$, consider $\Gamma_{\alpha}:=\left\lbrace\beta\in\mathbb{F}_{q^2}:\beta^{q}+\beta=\alpha^{q+1}\right\rbrace$. It is well known that for all $\alpha\in\mathbb{F}_{q^2}$, $\mid\Gamma_{\alpha}\mid=q$ and that the affine rational points of $\mathcal{H}_{q}$ are of the form $P_{\alpha\beta}:=(\alpha,\beta)\in\mathbb{F}_{q^2}\times\Gamma_{\alpha}$; that is, the set of $\mathbb{F}_{q^2}$ rational points of $\mathcal{H}_{q}$ is
\begin{equation*}
\mathcal{H}_{q}(\mathbb{F}_{q^2}):=\left\lbrace P_{\alpha\beta}:\alpha\in\mathbb{F}_{q^2}, \beta\in\Gamma_{\alpha}\right\rbrace\cup\left\lbrace P_{\infty}\right\rbrace,
\end{equation*}
where $P_{\infty}$ denotes the unique point at infinity which has projective coordinates $(0:1:0)$.
Recall that the Riemann-Roch space of a divisor $mP_{\infty}$ on $\mathcal{H}_{q}$ is the subset $\mathcal{L}(mP_{\infty})$ of $\mathbb{F}_{q^2}[x,y]$ generated by $I(m)$, where
\begin{equation*}
 I(m)=\lbrace x^{i}y^{j}: 0\leq i, 0\leq j\leq q-1, iq+j(q+1)\leq m\rbrace.
\end{equation*}
The one-point Hermitian code with design parameter $m$ is the algebraic geometry code $\mathcal{C}(mP_{\infty}):=ev(\mathcal{L}(mP_{\infty}))$; that is, 
\begin{equation}
\mathcal{C}(mP_{\infty}):=\left\lbrace(f(P_{1}),\ldots,f(P_{n})):f\in\mathcal{L}(mP_{\infty})\right\rbrace.
\end{equation}
\noindent Note that $\mathcal{C}(mP_{\infty})$ is a linear code of length $n=q^3$ over the field $\mathbb{F}_{q^2}$ and for $m^\prime <m$ we have $\mathcal{C}(m^{\prime}P_{\infty})\subseteq\mathcal{C}(mP_{\infty})$. Moreover, $\mathcal{C}(mP_{\infty})=\lbrace 0\rbrace$ for $m<0$, and $\mathcal{C}(mP_{\infty})=\mathbb{F}^{n}_{q^2}$ for $m>q^3+q^2-q-2$.

\begin{remark}\label{DualofHermitian} \cite[Proposition 8.3.2]{Stichtenoth2009AlgebraicFF}

\noindent For $0\leq m\leq q^3+q^2-q-2$, the dual code of the Hermitian code $\mathcal{C}(mP_{\infty})$ is
\begin{equation}
\mathcal{C}(mP_{\infty})^{\perp}=\mathcal{C}(m^{\perp}P_{\infty}),
\end{equation}
where $m^{\perp}=q^3+q^2-q-2-m$.
\end{remark}
\noindent Remark \ref{DualofHermitian} implies that $\mathcal{C}(mP_{\infty})$ is self-orthogonal if $2m\leq q^3+q^2-q-2$, and $\mathcal{C}(mP_{\infty})$ is self-dual if $m=\frac{q^3+q^2-q-2}{2}$.

\begin{lemma}\cite[Proposition 8.3.3]{Stichtenoth2009AlgebraicFF}

\noindent Suppose that $0\leq m\leq s:=q^3+q^2-q-2$. Then the following hold:
\begin{itemize}
    \item[i)]$\dim\mathcal{C}(mP_{\infty})=\left\lbrace\begin{array}{ll}
         \mid I(m)\mid&\text{for } 0\leq m\leq q^3  \\
         q^3-\mid I(m^{\perp})\mid&\text{for } q^3\leq m\leq s. 
    \end{array}\right.$

\item[ii)]For $q^2-q-2<m<q^3$ we have
\begin{equation*}
\dim\mathcal{C}(mP_{\infty})=m-\frac{q(q-1)}{2}+1.
\end{equation*}
\item[iii)] The minimum distance of $\mathcal{C}(mP_{\infty})$ is $d\geq d^{\star}=q^3-m$. 
\end{itemize}
\end{lemma}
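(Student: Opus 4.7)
The plan is to handle the three claims in turn, exploiting the monomial basis of the Riemann-Roch space $\mathcal{L}(mP_\infty)$, the duality statement in Remark~\ref{DualofHermitian}, and the Riemann-Roch theorem.

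For the first range in (i), namely $0\le m\le q^{3}$, the first step is to show that $I(m)$ is a basis of $\mathcal{L}(mP_\infty)$. Using $\operatorname{div}_{\infty}(x)=qP_\infty$ and $\operatorname{div}_{\infty}(y)=(q+1)P_\infty$, each monomial $x^{i}y^{j}$ has pole order exactly $iq+j(q+1)$ at $P_\infty$. The defining relation $y^{q}+y=x^{q+1}$ allows any power with $j\ge q$ to be reduced, so $\{x^{i}y^{j}: i\ge 0,\;0\le j\le q-1\}$ is an integral basis of the Hermitian function field over $\mathbb{F}_{q^{2}}(x)$. Restricting by the pole-order bound $iq+j(q+1)\le m$ yields a linearly independent set (distinct pole orders, since $0\le j\le q-1$ pins down the residue modulo $q$) that spans $\mathcal{L}(mP_\infty)$, giving $\dim\mathcal{L}(mP_\infty)=|I(m)|$. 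Next, for $m<q^{3}=n$, the evaluation map $\operatorname{ev}$ is injective: a nonzero $f\in\mathcal{L}(mP_\infty)$ has at most $m<n$ zeros and so cannot vanish on all $n$ affine rational points. Thus $\dim\mathcal{C}(mP_\infty)=|I(m)|$ throughout this range.

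For the second range of (i), $q^{3}\le m\le s$, I would invoke Remark~\ref{DualofHermitian}: $\mathcal{C}(mP_\infty)^{\perp}=\mathcal{C}(m^{\perp}P_\infty)$ with $m^{\perp}=s-m$. Since $0\le m^{\perp}\le q^{2}-q-2<q^{3}$, the first range applies to $m^{\perp}$, and duality gives $\dim\mathcal{C}(mP_\infty)=n-\dim\mathcal{C}(m^{\perp}P_\infty)=q^{3}-|I(m^{\perp})|$. For (ii), note that $g=q(q-1)/2$, hence $2g-2=q^{2}-q-2$, so Riemann-Roch yields $\dim\mathcal{L}(mP_\infty)=m-g+1$ for every $m>2g-2$. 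Combining this with the injectivity of $\operatorname{ev}$ in the range $m<q^{3}$ gives $\dim\mathcal{C}(mP_\infty)=m-q(q-1)/2+1$ for $q^{2}-q-2<m<q^{3}$, as claimed.

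For (iii), any nonzero $f\in\mathcal{L}(mP_\infty)$ has pole divisor supported only at $P_\infty$ with degree at most $m$; hence $f$ has at most $m$ zeros counted with multiplicity, and in particular at most $m$ among the evaluation points $P_{1},\ldots,P_{n}$. Therefore the Hamming weight of $(f(P_{1}),\ldots,f(P_{n}))$ is at least $n-m=q^{3}-m$, giving $d\ge d^{\star}=q^{3}-m$ (trivially so once $m\ge q^{3}$).

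The main technical obstacle is the basis claim that underlies (i): one must verify carefully that the monomials $x^{i}y^{j}$ with $0\le j\le q-1$ have pairwise distinct pole orders at $P_\infty$ and that the reduction via $y^{q}=x^{q+1}-y$ prevents any other monomial from contributing new elements to $\mathcal{L}(mP_\infty)$. Once this bookkeeping is in place, the remaining assertions follow routinely from Riemann-Roch and the duality recorded in Remark~\ref{DualofHermitian}.
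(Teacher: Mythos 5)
The paper does not prove this lemma at all: it is quoted verbatim from \cite[Proposition 8.3.3]{Stichtenoth2009AlgebraicFF} and used as a black box, so there is no in-paper argument to compare against. Your sketch is essentially the standard textbook proof and is sound: the monomials $x^iy^j$ with $0\le j\le q-1$ have pairwise distinct pole orders $iq+j(q+1)$ at $P_\infty$ (the residue mod $q$ recovers $j$), they span because the relation $y^q=x^{q+1}-y$ reduces higher powers of $y$ and distinct pole orders preclude cancellation, and injectivity of the evaluation map for $m<n$ plus Riemann--Roch and the duality of Remark~\ref{DualofHermitian} deliver (i) and (ii), while the degree bound on the zero divisor gives (iii). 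One boundary point worth flagging: at $m=q^3$ the evaluation map is \emph{not} injective, since $x^{q^2}-x\in\mathcal{L}(q^3P_\infty)$ vanishes at every affine rational point (equivalently $\sum_{i=1}^{n}P_i\sim q^3P_\infty$), so the first formula in (i) is only valid for $m<q^3$ and the value at $m=q^3$ is the one given by the dual formula; your argument implicitly gets this right by restricting the injectivity claim to $m<q^3$, but the conclusion ``throughout this range'' should not be read as including the endpoint as the statement's typesetting suggests.
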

\noindent The value $d^{\star}$ is called the \textit{designed minimum distance} of $\mathcal{C}(mP_{\infty})$. If $q^{2}-q-1<m<q^3-q^2+q+1$ then $d=d^{\star}=q^3-m$.

Theoretically, a Hermitian code can also be constructed by evaluating $f(x,y)\in\mathcal{L}(mP_{\infty})$ at a proper subset of the affine rational points, but then Remark \ref{DualofHermitian} may no longer hold. 

\section{A Motivating Example: $L=2$ and $T=1$} 

\label{sec3}

We begin our description of \textit{HerA} (Hermitian Algebraic) codes with the following example, which we present in as much detail as possible to present the crucial components of the scheme. We compare the size of the field required for a Discrete Fourier Transform code and a \textit{HerA} code to illustrate the flexibility \textit{HerA} codes have.

In this example, a user desires to compute the product of two matrices $A = \begin{bmatrix}A_1 &A_2\end{bmatrix}\in \mathbb{F}_{5}^{a \times b}$ and $B$ where $B^{\intercal} = \begin{bmatrix}B_1^{\intercal} & B_2^{\intercal}\end{bmatrix}\in \mathbb{F}_{5}^{c \times b}$ such that $AB = A_1 B_1 + A_2 B_2$ with the assistance of non-colluding helper servers. The solution via Discrete Fourier Transform codes schemes utilizes $N=4$ servers. It involves picking two random matrices $R \in \mathbb{F}_{5}^{a \times \frac{b}{2}}$ and $S \in \mathbb{F}_{5}^{\frac{b}{2} \times c}$ and constructing the one-variable polynomials $f'(x) = A_1 + A_2x + Rx^2$ and $g'(x) = B_1 + B_2x^{-1} + Sx^{-3}$. The user then selects four distinct non-zero elements $\beta_1,\beta_2,\beta_3, \beta_4 \in \mathbb{F}_{5}$ and uploads both $f'(\beta_i)$ and $g'(\beta_i)$ to Server $i$. Each server then computes the product $f'(\beta_i) \cdot g'(\beta_i)$. This is equivalent to computing an evaluation $h'(\beta_i)$ of the polynomial $h'(x) = (A_1B_1 + A_2B_2) + (A_2B_1 + RB_2)x + RB_1x^2 + A_1S x^{-3} + A_2S x^{-2} + (A_1B_2+RS) x^{-1}$. The user then downloads each $h'(\beta_i)$, obtaining four evaluations of a polynomial of degree two. Therefore, the user can retrieve $AB = A_1B_1 + A_2B_2$ by operating $4\sum_{i=1}^4 h'(\beta_i)$ since $\sum_{i=1}^4 (3^{i})^s = 0 \ \  \forall s : 4 \nmid  s$.

The security of the Discrete Fourier Transform codes follows from the fact that $I(f'(\beta_i), g'(\beta_i);A,B)=0$. As for the communication costs, first the user uploads $f'(\beta_i)$ and $g'(\beta_i)$, which cost $ab$ and $bc$, symbols respectively, four times. Thus, the upload cost is $4(ab+bc)$ symbols of $\mathbb{F}_{5}$. Then, the user downloads $h'(\beta_i)$, which costs $ac$ symbols of $\mathbb{F}_{5}$, four times, obtaining a download cost of $4ac$ symbols of $\mathbb{F}_{5}$. Since the user retrieves $AB \in \mathbb{F}_{5}^{a \times c}$, which consists of $ac$ symbols of $\mathbb{F}_{5}$, the total communication rate is given by $\mathcal{R}' = \frac{ac}{4ab+4bc+4ac}$.

The setting we consider for our construction is similar to the one considered for DFT codes except for the size of the field, i.e., a user wants to compute the product of two matrices $A = \begin{bmatrix}A_1 &A_2\end{bmatrix}\in \mathbb{F}_{4}^{a \times b}$ and $B$ where $B^{\intercal} = \begin{bmatrix}B_1^{\intercal} & B_2^{\intercal}\end{bmatrix}\in \mathbb{F}_{4}^{c \times b}$ such that $AB = A_1 B_1 + A_2 B_2$. We also pick two random matrices $R \in \mathbb{F}_{4}^{a \times \frac{b}{2}}$ and $S \in \mathbb{F}_{4}^{\frac{b}{2} \times c}$. Based on properties of Hermitian codes, we present an \textit{HerA} code which allows multiplying matrices in a smaller finite field utilizing $N=4$ servers.

Let $\delta\in\mathbb{F}_4$ denote a algebraic element in $\mathbb{F}_4$ such that $\delta^2 + \delta +1 = 0$. Let $P_{\alpha\beta} \in \mathbb{F}_{4}\times \Gamma_\alpha$ denote rational points satisfying $\beta^2 + \beta = \alpha^3$.  Therefore, $P_{00} = (0, 0)$, $P_{01} = (0, 1)$, $P_{1\delta} = (1, \delta)$, $P_{1\delta^2} = (1, \delta^2)$, $P_{\delta\delta} = (\delta, \delta)$, $P_{\delta\delta^2} = (\delta, \delta^2)$, $P_{\delta^2\delta} = (\delta^2, \delta)$ and  $P_{\delta^2\delta^2} = (\delta^2, \delta^2)$ are all the possible affine rational points. Let $f(x, y)$ be a two-variable polynomial generated by monomials $\{1, x, y\}$ such that $f(P_{00}) = A_1$, $f(P_{01}) = A_2$  and $f(P_{1\delta}) = R$. Let $g(x,y)$ be a two-variable polynomial generated by monomials $\{1, x, x^2, y, xy\}$ such that $g(P_{00}) = B_1$, $g(P_{01}) = B_2$, $g(P_{1\delta}) = S$, $g(P_{1\delta^2}) = 0$ and $g(P_{\delta\delta}) = 0$. The explicit polynomials are

$$f(x, y) = A_1 + (A_1+R+(A_1+A_2)\delta)x+(A_1+A_2)y$$ and 
$$\begin{array}{rc}g(x,y)= B_1 + (S+B_1\delta^2+B_2)x+(B_1\delta+B_2+S\delta)x^2&\\
+(B_1+B_2)y+(B_1+B_2+S)xy&
\end{array}$$

Then, $h(x,y) = f(x,y) \cdot g(x,y)$ is such that $h(P_{00}) + h(P_{01})= A_1B_1 + A_2B_2 = AB$.

Since $(f(P_{00}), f(P_{01}), \ldots, f(P_{\delta^2\delta^2})) \in \mathcal{C}(3P_\infty)$ and $$\left(g(P_{00}), g(P_{01}), \ldots, g(P_{\delta^2\delta^2})\right) \in \mathcal{C}(5P_\infty),$$
the dual of $\mathcal{C}(3P_\infty)$, it follows that 

\begin{equation}\label{eq:ex_dual}
    \displaystyle \sum_{\alpha\beta} f(P_{\alpha\beta}) \cdot g(P_{\alpha\beta})  = \sum_{\alpha\beta} h(P_{\alpha\beta})= 0.
\end{equation}

Our scheme works as follows: the user uploads the evaluations $f(P_i)$ and $g(P_i)$ to each Server $i$, where $(P_1, P_2, P_3, P_4) = (P_{1\delta},P_{\delta\delta^2}, P_{\delta^2\delta},P_{\delta^2\delta^2})$. Then, each Server $i$ computes $-h(P_i)$, and sends it back to the user. The user can decode $AB$ as follows:

\begin{align*}
    - & h(P_{1\delta}) - h(P_{\delta\delta^2}) - h(P_{\delta^2\delta}) - h(P_{\delta^2\delta^2}) \\
    =& \!  -\! h(P_{1\delta}) \!-\! h(P_{1\delta^2}) \!-\! h(P_{\delta\delta}) \!- \!h(P_{\delta\delta^2})\!-\! h(P_{\delta^2\delta}) \! - \! h(P_{\delta^2\delta^2})\\
    =&  h(P_{00}) + h(P_{01}) = A_1B_1 + A_2B_2 =  AB.
\end{align*} 

The first equality follows from the fact that $h(P_{1\delta^2}) =f (P_{1\delta^2})\cdot  g(P_{1\delta^2}) = 0 = f(P_{\delta\delta})\cdot g(P_{\delta\delta}) = h(P_{\delta\delta})$ since $g(P_{1\delta^2}) = g(P_{\delta\delta}) = 0$. The second equality follows from Equation \ref{eq:ex_dual}.

Security follows by showing that $I(f(P_i), g(P_i);A,B)=0$, as is done in Lemma \ref{lem:tsecure}. As for the communication costs, first, the user uploads $f(P_i)$ and $g(P_i)$, which cost $2ab$ and $2bc$, symbols, respectively. Thus, the upload cost is $(2ab+2bc)$ symbols of $\mathbb{F}_{4}$. Then, the user downloads $h(P_i)$, which costs $ac$ symbols of $\mathbb{F}_{4}$, four times, obtaining a download cost of $4ac$ symbols of $\mathbb{F}_{4}$. Since the user retrieves $AB \in \mathbb{F}_{4}^{a \times c}$, which consists of $2ac$ symbols of $\mathbb{F}_{4}$, the total communication rate is given by $\mathcal{R} = \frac{ac}{4ab+4bc+2ac}$. 

We note that the total communication rate of the \textit{HerA} code is equal to the total communication rate of the DFT code. However, we showcase that the \textit{HerA} code utilizes a smaller field size avoiding the divisibility constraint required for DFT codes. It raises the theoretical question on the field's capacity: Is $\mathbb{F}_4$ the 
smallest field to perform the product of matrices $A$ and $B$ given $L=2$ and $T=1$? On the other side, what are the maximum partitioning and security parameters allowed in secure distributed matrix multiplication over $\mathbb{F}_4$?

\section{HerA Scheme}

This section presents the general construction for the \textit{HerA} Scheme. The main idea is to perform the same technique as in Section~\ref{sec3} that retrieves the product $AB$ using the inner product and encoding  matrix $A$ in a Hermitian code while matrix $B$ is encoded in its dual code. Consider the matrices $A\in\mathbb{F}_{q^2}^{a\times b}$, $B\in\mathbb{F}_{q^2}^{b\times c}$.

\noindent \textbf{Choosing Parameters $L$, $T$ and $m$:} We begin by choosing parameters $L$ and $T$ such that $2(L+T)\leq q^3 - \frac{q(q-1)}{2}$ and set $m=L+T+\frac{q(q-1)}{2}-1$. We remark that the bound $2(L+T)\leq q^3 - \frac{q(q-1)}{2}$ explicates a finite field's partitioning and security capacities with $q^2$ elements.

\noindent \textbf{Choosing the Polynomials:} As described in the introduction, we consider the setting where the user partitions the matrices $A\in\mathbb{F}^{a\times b}_{q^2}$ and $B\in\mathbb{F}^{b\times c}_{q^2}$ as $A=[A_{1}\ldots A_{L}]$ and as $B^{T}=[B^{T}_{1}\ldots B^{T}_{L}]$ such that $AB=A_{1}B_{1}+\cdots+A_{L}B_{L}$, where each $A_{i}\in\mathbb{F}^{a\times\frac{b}{L}}_{q^2}$, $B_{i}\in\mathbb{F}^{\frac{b}{L}\times c}_{q^2}$. In order to obtain $T$-security, $R_{1},\ldots,R_{T}\in\mathbb{F}^{a\times\frac{b}{L}}_{q^2}$ and $S_{1},\ldots,S_{T}\in\mathbb{F}^{\frac{b}{L}\times c}_{q^2}$ are chosen independently and uniformly at random. We then choose $\left\lbrace P_{1},\ldots,P_{L+T}\right\rbrace\subseteq\mathcal{H}_{q}(\mathbb{F}_{q^2})\setminus\left\lbrace P_{\infty}\right\rbrace$, $f\in\mathcal{L}(mP_{\infty})$, and $g\in\mathcal{L}(m^{\perp}P_{\infty})$ such that $f(P_{i})=A_{i}$, $g(P_{i})=B_{i}$ for every $i\in [L]$; $f(P_{L+i})=R_{i}$, $g(P_{L+i})=S_{i}$ for every $i\in [T]$; and $g(P_{L+T+i})=0$ for every $i\in [q^{3}-2(L+T)]$.

\noindent \textbf{Upload Phase:} The \textit{HerA} scheme uses $L+2T$ serves. The user uploads $f(P_{L+i}),g(P_{L+i})$ to the server $N_{i}$, $i\in [T]$ and $f(P_{q^{3}-L-T+i}),g(P_{q^{3}-L-T+i})$ to the server $N_{T+i}$, $i\in [L+T]$.

\noindent \textbf{Download Phase:} 
Each server $N_{i}$, $i\in [T]$ computes $-h(P_{L+i})=-f(P_{L+i})g(P_{L+i})$, and each server $N_{T+i}$, $i\in [L+T]$ computes $-h(P_{q^{3}-L-T+i})=-f(P_{q^{3}-L-T+i})g(P_{q^3-L-T+i})$ and sends these values to the user.

\noindent \textbf{User Decoding:} In Lemma~\ref{lem:decodability}, we show that the user can decode $AB=h(P_1)+h(P_{2})+\cdots+h(P_{L})$ from 
$\{-h(P_{L+i})\}_{i=1}^{T}\cup\{-h(P_{q^3-L-T+i})\}_{i=1}^{L+T}$.

\section{Proof of Theorem \ref{theo:scheme}}
We split the proof into lemmas. We show that \textit{HerA} schemes are decodable in Lemma~\ref{lem:decodability} and $T$-secure, in Lemma~\ref{lem:tsecure}. These statements combined prove Theorem~\ref{theo:scheme}. 

\begin{lemma}\label{lem:decodability}
Given a prime power $q$ and positive integers $L$ and $T$ such that $L+T\leq\frac{q^3}{2}$, define $m:=L+T+\frac{q(q-1)}{2}-1$ and let $\mathcal{L}(m P_{\infty})$ be the Riemann-Roch space of the divisor $mP_{\infty}$ on the Hermitian curve $\mathcal{H}_{q}:y^{q}+y=x^{q+1}$. If $A=\left[A_{1}\cdots A_{L}\right]\in\mathbb{F}^{a\times b}_{q^2}$ and $B^{T}=\left[
B^{T}_{1}\cdots B^{T}_{L}\right]\in\mathbb{F}^{c\times b}_{q^2}$. Then, $h(P_1)+\cdots+h(P_{L})$ can be decoded using $L+2T$ servers.
\end{lemma}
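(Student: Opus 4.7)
\textbf{Proof proposal for Lemma \ref{lem:decodability}.}

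The plan is to exhibit an explicit linear combination of the $L+2T$ server responses that equals $AB$, exploiting the duality of Hermitian codes from Remark~\ref{DualofHermitian}. First I would check that the parameter choices place us inside the valid range for invoking duality. From $m=L+T+\tfrac{q(q-1)}{2}-1$ one computes directly that $m^\perp=q^3+q^2-q-2-m = q^3-L-T+\tfrac{q(q-1)}{2}-1$, and the hypothesis $L+T\le q^3/2$ forces $0\le m\le m^\perp\le q^3+q^2-q-2$. Consequently Remark~\ref{DualofHermitian} applies, so $\mathcal{C}(mP_\infty)^\perp = \mathcal{C}(m^\perp P_\infty)$.

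Next, since $f\in\mathcal{L}(mP_\infty)$ and $g\in\mathcal{L}(m^\perp P_\infty)$, the evaluation vectors $(f(P_{\alpha\beta}))_{\alpha\beta}$ and $(g(P_{\alpha\beta}))_{\alpha\beta}$, indexed by all $q^3$ affine rational points of $\mathcal{H}_q$, lie in mutually dual codes. The standard inner product therefore vanishes, yielding the key identity
\begin{equation*}
\sum_{j=1}^{q^3} h(P_j) \;=\; \sum_{j=1}^{q^3} f(P_j)\,g(P_j) \;=\; 0.
\end{equation*}
I would then split the $q^3$ indices into four blocks mirroring the construction: the indices $[1,L]$ (where $h(P_j)=A_jB_j$ by the interpolation conditions), the indices $[L+1,L+T]$ (where $h(P_j)=R_iS_i$ and values are held by the first $T$ servers), the indices $[L+T+1,q^3-L-T]$ (where $g(P_j)=0$, so $h(P_j)=0$), and the indices $[q^3-L-T+1,q^3]$ (whose evaluations are held by the remaining $L+T$ servers).

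Rearranging the vanishing sum and using $\sum_{j=1}^{L}h(P_j)=\sum_{j=1}^{L}A_jB_j=AB$ gives
\begin{equation*}
AB \;=\; -\!\!\sum_{i=1}^{T} h(P_{L+i}) \;-\!\!\sum_{i=1}^{L+T} h(P_{q^3-L-T+i}),
\end{equation*}
which is exactly the sum of the $L+2T$ values $-h(P_{L+i})$ and $-h(P_{q^3-L-T+i})$ returned by servers $N_1,\ldots,N_{L+2T}$. The user therefore decodes $AB$ by a single addition over $\mathbb{F}_{q^2}$.

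The main obstacle I anticipate is not the decoding identity itself, which follows cleanly from the dual-code relation, but rather confirming that all bookkeeping conditions are consistent: namely that the $q^3-2(L+T)$ vanishing conditions imposed on $g$ together with the $L+T$ prescribed values still leave $g$ realizable in $\mathcal{L}(m^\perp P_\infty)$, and similarly for $f\in\mathcal{L}(mP_\infty)$ with its $L+T$ prescribed values. Here one uses Riemann--Roch to get $\dim\mathcal{L}(mP_\infty)\ge L+T$ and $\dim\mathcal{L}(m^\perp P_\infty)\ge q^3-(L+T)$, and then invokes the $T$-MDS hypothesis from Theorem~\ref{theo:scheme} to guarantee that the evaluation maps at the chosen point sets are surjective, so that polynomials $f$ and $g$ meeting the required interpolation and vanishing conditions actually exist.
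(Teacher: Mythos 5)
Your proposal is correct and follows essentially the same route as the paper's proof: both rest on the duality $\mathcal{C}(mP_\infty)^\perp=\mathcal{C}(m^\perp P_\infty)$, the vanishing of $\sum_{j=1}^{q^3}f(P_j)g(P_j)$, and the splitting of the index set into the data block, the randomness block, the block where $g$ vanishes, and the final $L+T$ points. Your added checks --- that $m$ lies in the range where Remark~\ref{DualofHermitian} applies and that interpolating $f$ and $g$ actually exist --- are sound and go slightly beyond what the paper writes out, but they do not change the argument.
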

\begin{proof}
Let $\left\lbrace P_{i}\right\rbrace_{i=1}^{q^3-L-T}\subseteq\mathcal{H}_{q}(\mathbb{F}_{q^2})\setminus\left\lbrace P_{\infty}\right\rbrace$, $f\in\mathcal{L}(mP_{\infty})$ and $g\in\mathcal{L}(m^{\perp}P_{\infty})$ be polynomials such that
\begin{equation*}
\begin{array}{llr}
f(P_i)=A_{i},&g(P_{i})=B_{i} \forall i\in[L],\\
f(P_{L+i})=R_{i},&g(P_{L+i})=S_{i} \forall i\in[T],\\
&g(P_{L+T+i})=0\forall i\in[ q^3-2(L+T)],
\end{array}
\end{equation*}
using the inner product partitioning $A=\left[
A_{1}\cdots A_{L}\right]$ and $B^{T}=\left[
B^{T}_{1}\cdots B^{T}_{L}\right]$ and uniformly distributed random $\mathbb{F}_{q^2}$-matrices $R_{i},S_{i}$. Therefore, $h(x,y)=f(x,y)g(x,y)$ is such that $h(P_{i})=A_{i}B_{i}$ for all $i\in [L]$. Note that 
$$(f(P_1),\ldots,f(P_{q^3}))\in\mathcal{C}(mP_{\infty})$$
and
$$(g(P_1),\ldots,g(P_{q^3}))\in\mathcal{C}(m^{\perp}P_{\infty})$$
since $f\in\mathcal{L}(mP_{\infty})$ and  $g\in\mathcal{L}(m^{\perp}P_{\infty})$. The dual-code property implies that
\begin{eqnarray*}
0&=&\sum_{i=1}^{q^3}f(P_i)g(P_i)\\
&=&\sum_{i=1}^{L}f(P_i)g(P_i)+\sum_{i=L+1}^{L+T}f(P_i)g(P_i)\\&&+\sum_{i=q^3-L-T+1}^{q^3}f(P_i)g(P_i).
\end{eqnarray*}
So, $\sum_{i=1}^{L}f(P_i)g(P_i)=$
\begin{equation}
-\sum_{i=L+1}^{L+T}f(P_i)g(P_i)-\sum_{i=q^3-L-T+1}^{q^3}f(P_i)g(P_i),
\end{equation}
proving that $AB$ is performed using $L+2T$ servers.
\end{proof}
\begin{definition} The functions $f_1(x,y), \ldots, f_T(x,y)\in\mathcal{L}(mP_\infty)$  and the set $\mathcal{T}\subset \{1, 2, \ldots, q^3\}$ satisfy the \emph{$T$-MDS} condition if \[
\mathbf{F^{(T)}}=
\left(\begin{matrix}
f_{1}(P_{i_1})&f_{1}(P_{i_2})&\cdots& f_{1}(P_{i_T})\\
f_{2}(P_{i_1})&f_{2}(P_{i_2})&\cdots&f_{2}(P_{i_T})\\
\vdots & \vdots & \ddots & \vdots\\
f_{T}(P_{i_1})&f_{T}(P_{i_2})&\cdots& f_{T}(P_{i_T})\\
\end{matrix}\right)
\] has full rank for any different $i_1, i_2, \ldots, i_T\in \mathcal{T}$ and $P_{i_1}, P_{i_2}, \ldots, P_{i_T} \in \mathcal{H}_q(\mathbb{F}_{q^2})\setminus\{P_\infty\}$.
\end{definition}
\begin{lemma}\label{lem:tsecure}
Let  $f(x,y) = \sum_{i=0}^{L+T} f_i(x,y)$ be a polynomial encoding the matrix $A$ satisfying conditions in Lemma \ref{lem:decodability}. If $f_{L+1}(x, y), f_{L+2}(x, y), \ldots, f_{L+T}(x, y)$ and there is $\mathcal{T} \subset \{L+1, \ldots, L+T\}\cup \{q^3-2(L+T), q^3\}$ with $|\mathcal{T}|=L+2T$ satisfying the $T$-MDS condition,
then $I(A; f(P_{i_1}), f(P_{i_2}),\ldots, f(P_{i_T}))=0$. A similar argument holds for $g(x,y)$ and $B$, implying that the HerA scheme is $T$-secure.
\end{lemma}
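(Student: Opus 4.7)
The plan is to execute the standard one-time-pad-style argument: I will show that conditional on $A$, the eavesdropped vector $(f(P_{i_1}),\ldots,f(P_{i_T}))$ is uniformly distributed over $(\mathbb{F}_{q^2}^{a\times b/L})^T$, which immediately forces the mutual information to vanish. Since the choice $m=L+T+\frac{q(q-1)}{2}-1$ places us in the dimension-formula regime of part (ii) of Lemma~1, giving $\dim\mathcal{L}(mP_\infty)=m-g+1=L+T$, the polynomial $f$ is uniquely determined by the $L+T$ interpolation conditions $f(P_j)=A_j$ $(j\in[L])$ and $f(P_{L+j})=R_j$ $(j\in[T])$, and admits a Lagrange-like decomposition
$$f(x,y) \;=\; \sum_{j=1}^{L} A_j\,\phi_j(x,y) \;+\; \sum_{j=1}^{T} R_j\,f_{L+j}(x,y)$$
inside $\mathcal{L}(mP_\infty)$, where $\phi_j$ and $f_{L+j}$ form the dual basis to $\{P_1,\ldots,P_{L+T}\}$.

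Evaluating this decomposition at the eavesdropped points then yields an affine relation
$$\begin{pmatrix}f(P_{i_1})\\ \vdots \\ f(P_{i_T})\end{pmatrix} \;=\; \mu(A) \;+\; \bigl(\mathbf{F}^{(T)}\bigr)^{\!\top}\begin{pmatrix}R_1\\ \vdots \\ R_T\end{pmatrix},$$
where $\mu(A)$ depends on $A$ alone and $\mathbf{F}^{(T)}$ is precisely the $T\times T$ matrix flagged by the $T$-MDS condition. Since $\mathbf{F}^{(T)}$ is invertible by hypothesis and the $R_j$ are i.i.d.\ uniform on $\mathbb{F}_{q^2}^{a\times b/L}$, the map $R\mapsto \mu(A)+(\mathbf{F}^{(T)})^{\!\top} R$ is an affine bijection; hence the conditional distribution of the eavesdropped vector given $A$ is uniform on $(\mathbb{F}_{q^2}^{a\times b/L})^T$, so its conditional entropy attains the maximum $Tab\log(q^2)/L$. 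Since the marginal entropy is bounded above by the same quantity and is at least the conditional entropy, the two coincide and $I(A;f(P_{i_1}),\ldots,f(P_{i_T}))=0$.

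The statement for $g$ and $B$ follows by the same template inside $\mathcal{L}(m^\perp P_\infty)$, which by Remark~\ref{DualofHermitian} has dimension $q^3-L-T$, exactly matching the $L+T+(q^3-2(L+T))=q^3-L-T$ prescribed values (the $B_j$, the $S_j$, and the forced zeros). Replaying the uniformity argument with $S_j$ in place of $R_j$ delivers the dual conclusion, and the independence between the randomness $R_j$ used in $f$ and $S_j$ used in $g$ finishes the $T$-security claim. The step I expect to be the main obstacle, and the one the lemma sidesteps by taking it as hypothesis, is verifying the $T$-MDS condition concretely: inside a non-MDS algebraic-geometry code one must select the interpolation nodes on $\mathcal{H}_q$ so that \emph{every} $T$-subset of columns of the random-coefficient evaluation matrix is nonsingular—a delicate property that the motivating example in Section~\ref{sec3} handles by explicit enumeration of rational points on the Hermitian curve.
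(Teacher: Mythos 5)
Your proposal is correct and follows essentially the same route as the paper: decompose $f$ over the Lagrange-type basis of $\mathcal{L}(mP_\infty)$ (which has dimension $L+T$ for this choice of $m$), observe that the eavesdropped evaluations are an affine image of the uniform randomness $R_1,\ldots,R_T$ through the matrix $\mathbf{F}^{(T)}$, and conclude from its full rank that the conditional entropy given $A$ attains the maximal value $\frac{Tab}{L}\log(q^2)$, forcing the mutual information to zero. Your write-up is in fact slightly more explicit than the paper's at the key step (the affine-bijection argument for conditional uniformity, which the paper asserts via $H(\cdot\mid A)=\frac{\mathrm{rank}(\mathbf{F}^{(T)})ab}{L}\log(q^2)$ without elaboration), but the substance is identical.
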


\begin{proof}
{Since $f(x,y)$ is independent of $B$ and $g(x,y)$ is independent of $A$, proving $T$-security is equivalent to showing that $I(A;f(P_{i_1}), \ldots, f(P_{i_T}))=I(B;g(P_{i_1}), \ldots, g(P_{i_T}))=0$, for any $$\mathcal{T} = \{i_1, \ldots, i_T\}\subset \{L+1, \ldots, L+T\}\cup \{q^3-2(L+T), q^3\}.$$  We prove the claim for $f(x,y)$ since the proof for $g(x,y)$ is analogous.

Since $m = L+T + \frac{q(q-1)}{2} - 1$, $ |I(m)|= L+T$. Therefore, there exists $\left\lbrace P_{1},\ldots,P_{L+T}\right\rbrace\subseteq\mathcal{H}_{q}(\mathbb{F}_{q^2})\setminus\left\lbrace P_{\infty}\right\rbrace$ and  $f(x,y)$ expressed as
\begin{align*}
f(x,y) = \sum_{i=1}^{L+T}f_{i}(x,y)f(P_i)
,
\end{align*}
where each $f_{i}(x,y)\in \mathcal{L}(mP_{\infty})$ with 

\[
f_{i}(x,y) = \left\lbrace\begin{array}{ll}
          1&\text{if } (x,y) = P_i  \\
         0 &\text{if } (x,y) = P_j \text{ and } j\in [L+T]\setminus\{i\}.
    \end{array}\right.
\]

Then,
\begin{align*}
&I(A;f(P_{i_1}), \ldots, f(P_{i_T}))\\
=&H(f(P_{i_1}), \ldots, f(P_{i_T})) - H(f(P_{i_1}), \ldots, f(P_{i_T})|A)\\
\le & \sum_{j \in \mathcal{T}}H(f(P_{j})) - H(f(P_{i_1}), \ldots, f(P_{i_T})|A)\\
=& \frac{Tab}{L}\log(q^2) -\frac{\mathrm{rank}(\mathbf{F^{(T)}})ab}{L}\log(q^2).
\end{align*}

Since the evaluation points $E=\{P_i: i\in \mathcal{T}\}$ are such that $\mathbf{F^{(T)}}$ has full rank for any $T$ different $P_i$'s in $E$, the $f^{(T)}(P_{i_j})$'s are uniformly distributed in the space of the matrices $M_{a\times \frac{b}{L}}(\mathbb{F}_{q^2})$. Thus, 
$H(f^{(T)}(P_{i_1}), \ldots, f^{(T)}(P_{i_T})) = \frac{Tab}{L}\log(q^2)$;
therefore, $I(A;f(P_{i_1}), \ldots, f(P_{i_T}))= 0$.}
\end{proof}

\section{Example: $L=2$, $T=2$ and $q=3$}\label{sec6}
{
\noindent \textbf{Choosing Parameters $L$, $T$ and $m$:} Since $L=T=2$, the \textit{HerA} scheme can be performed on $\mathbb{F}_{3^2}$ ($q=3$), setting $m=6$.

\noindent \textbf{Choosing the Polynomials:} Since $L=2$, the matrices $A\in\mathbb{F}^{a\times b}_{9}$ and $B\in\mathbb{F}^{b\times c}_{9}$ are partitioned as $A=[A_{1} A_{2}]$ and $B^{T}=[B^{T}_{1} B^{T}_{2}]$ such that $AB=A_{1}B_{1}+A_{2}B_{2}$, with $A_{i}\in\mathbb{F}^{a\times\frac{b}{2}}_{9}$, $B_{i}\in\mathbb{F}^{\frac{b}{2}\times c}_{9}$, for $i=1,2$. In order to obtain $2$-security, $R_{1},R_{2}\in\mathbb{F}^{a\times\frac{b}{2}}_{9}$ and $S_{1},S_{2}\in\mathbb{F}^{\frac{b}{2}\times c}_{9}$ are chosen independently and uniformly at random. Choose $\left\lbrace P_{1},P_{2},P_{3},P_{4},P_5,P_6,P_7,P_8\right\rbrace\subseteq\mathcal{H}_{3}(\mathbb{F}_{9})\setminus\left\lbrace P_{\infty}\right\rbrace$, $f\in\mathcal{L}(6P_{\infty})$, $g\in\mathcal{L}(25P_{\infty})$ such that $f(P_{i})=A_{i}$, $g(P_{i})=B_{i}$ for every $i\in [2]$; $f(P_{2+i})=R_{i}$, $g(P_{2+i})=S_{i}$ for every $i\in [2]$ and $g(P_{j})=0$ for all $P_{j}\in\mathcal{H}_{3}(\mathbb{F}_{9})\setminus(\left\lbrace P_{\infty}\right\rbrace\cup\lbrace P_{i}\rbrace_{i=1}^{8})$.

\noindent In this example, for $$P_{1}=(0,0),P_{2}=(0,\delta+1),P_{3}=(1,2), P_{4}=(\delta,1),$$
$$P_5=(2,2), P_6=(\delta+1,2), P_7=(\delta+2,\delta+2),P_8=(2\delta,1)\text{,}$$ with $\delta\in \mathbb{F}_9$ a primitive element, transforming it in an interpolation problem. With the help of the Mathematics Software SageMath\cite{sagemath}, we found a solution for the system of equations leading to the following encoding polynomials
$$f(x,y)= f_{1}(x,y)A_1 + f_{2}(x,y)A_2+f_{3}(x,y)R_1+f_{4}(x,y)R_2$$ where {${f_1(x,y)}=1+\delta x^2+(\delta+1)y$, ${f_{2}(x,y)}=2\delta x+2x^2+(2\delta +2)y$, ${f_3(x,y)}=(\delta+1)x+2\delta x^2$ and ${f_{4}(x,y)}=2x+x^2$} and $$g(x,y)= g_{1}(x,y)B_1 + g_{2}(x,y)B_2+g_{3}(x,y)S_1+g_{4}(x,y)S_2$$ where {${g_{1}(x,y)}=(1+2x+xy+2x^3y+x^4y+x^5y+x^6y+y^2+2x^2y^2+x^3y^2+x^4y^2+x^5y^2)+\delta(2x^4+2x^7+x^8)+(\delta+2)(x^2+x^3)+(2\delta+2)x^6$ and ${g_{2}(x,y)}=(2x^2+2x^3+x^2y+x^3y+x^4y+y^2)+\delta(2x+2x^5+2x^6+2x^7+2xy+2x^5y+xy^2+2x^4y^2+x^5y^2)+(\delta+1)(y+x^2y^2+x^3y^2)+(\delta+2)x^8+(2\delta+1)(x^7y+x^6y)$\\${g_{3}(x,y)}=(x^6y+2x^7y+x^7)+\delta(x^5+x^6+x^3y)+(\delta+1)(x^2+x^3+x^5y^2+x^8)+(\delta+2)(xy+x^2y^2)+2\delta(x^2y+x^5y+xy^2)+(2\delta+2)(x+x^4)$\\ ${g_{4}(x,y)}=(x^2y^2+2x^5y^2)+\delta(x+x^7+x^6y+x^7y)+(\delta+1)(x^3+x^6+x^3y)+(\delta+2)(x^4+x^3y^2)+2\delta(xy+x^5y)+(2\delta+1)(x^2+x^8+xy^2)+(2\delta+2)x^2y$}.
\\

\noindent \textbf{Upload Phase:} The \textit{HerA} scheme uses $6$ servers. The user uploads $f(P_{2+i}),g(P_{2+i})$ to the $i$-th server. 

\noindent \textbf{Download Phase:} 
Each server computes $-h(P_{2+i})=-f(P_{2+i})g(P_{2+i})$ and sends the values back to the user.

\noindent \textbf{User Decoding:} In Lemma~\ref{lem:decodability}, we show that the user can decode $AB=h(P_1)+h(P_{2})$ from 
$$\{-h(P_{2+i})\}_{i=1}^{2}\cup\{-h(P_{4+i})\}_{i=1}^{4}.$$
It can be checked that $f_3(x,y)$, $f_4(x,y)$ and $\mathcal{T}= \{3,4, \ldots, 8\}$ satisfy the $2$-MDS condition. Similarly, $g_3(x,y)$, $g_4(x,y)$ and the same $\mathcal{T}$ satisfy the $2$-MDS condition. Therefore, the scheme is $2$-secure.}

\section{Conclusion}

This paper proposes a secret-sharing-based scheme for secure distributed matrix multiplication. We give a general framework based on Hermitian codes achieving the same recovery threshold as the state-of-the-art in the literature while still using small finite fields, allowing more partitioning and security in a fixed finite field $\mathbb{F}_{q^2}$. 

\bibliographystyle{IEEEtran}
\bibliography{references.bib}

\begin{thebibliography}{10}
\providecommand{\url}[1]{#1}
\csname url@samestyle\endcsname
\providecommand{\newblock}{\relax}
\providecommand{\bibinfo}[2]{#2}
\providecommand{\BIBentrySTDinterwordspacing}{\spaceskip=0pt\relax}
\providecommand{\BIBentryALTinterwordstretchfactor}{4}
\providecommand{\BIBentryALTinterwordspacing}{\spaceskip=\fontdimen2\font plus
\BIBentryALTinterwordstretchfactor\fontdimen3\font minus
  \fontdimen4\font\relax}
\providecommand{\BIBforeignlanguage}[2]{{%
\expandafter\ifx\csname l@#1\endcsname\relax
\typeout{** WARNING: IEEEtran.bst: No hyphenation pattern has been}%
\typeout{** loaded for the language `#1'. Using the pattern for}%
\typeout{** the default language instead.}%
\else
\language=\csname l@#1\endcsname
\fi
#2}}
\providecommand{\BIBdecl}{\relax}
\BIBdecl

\bibitem{ravi2018mmult}
W.-T. Chang and R.~Tandon, ``On the capacity of secure distributed matrix
  multiplication,'' in \emph{2018 IEEE Global Communications Conference
  (GLOBECOM)}, 2018, pp. 1--6.

\bibitem{Kakar2019OnTC}
J.~Kakar, S.~Ebadifar, and A.~Sezgin, ``On the capacity and
  straggler-robustness of distributed secure matrix multiplication,''
  \emph{IEEE Access}, vol.~7, pp. 45\,783--45\,799, 2019.

\bibitem{koreans}
H.~Yang and J.~Lee, ``Secure distributed computing with straggling servers
  using polynomial codes,'' \emph{IEEE Transactions on Information Forensics
  and Security}, vol.~14, no.~1, pp. 141--150, 2018.

\bibitem{d2019gasp}
R.~G.~L. D’Oliveira, S.~El~Rouayheb, and D.~Karpuk, ``Gasp codes for secure
  distributed matrix multiplication,'' in \emph{2019 IEEE International
  Symposium on Information Theory (ISIT)}.\hskip 1em plus 0.5em minus
  0.4em\relax IEEE, 2019, pp. 1107--1111.

\bibitem{DOliveira2019DegreeTF}
R.~G.~L. D'Oliveira, S.~{El Rouayheb}, D.~Heinlein, and D.~Karpuk, ``Degree
  tables for secure distributed matrix multiplication,'' in \emph{2019 IEEE
  Information Theory Workshop (ITW)}, 2019.

\bibitem{Aliasgari2019DistributedAP}
M.~Aliasgari, O.~Simeone, and J.~Kliewer, ``Distributed and private coded
  matrix computation with flexible communication load,'' \emph{2019 IEEE
  International Symposium on Information Theory (ISIT)}, pp. 1092--1096, 2019.

\bibitem{aliasgari2020private}
------, ``Private and secure distributed matrix multiplication with flexible
  communication load,'' \emph{IEEE Transactions on Information Forensics and
  Security}, vol.~15, pp. 2722--2734, 2020.

\bibitem{Kakar2019UplinkDownlinkTI}
J.~Kakar, A.~Khristoforov, S.~Ebadifar, and A.~Sezgin, ``Uplink-downlink
  tradeoff in secure distributed matrix multiplication,'' \emph{ArXiv}, vol.
  abs/1910.13849, 2019.

\bibitem{doliveira2020notes}
R.~G.~L. D’Oliveira, S.~E. Rouayheb, D.~Heinlein, and D.~Karpuk, ``Notes on
  communication and computation in secure distributed matrix multiplication,''
  in \emph{2020 IEEE Conference on Communications and Network Security (CNS)},
  2020, pp. 1--6.

\bibitem{Yu2020EntangledPC}
Q.~Yu and A.~S. Avestimehr, ``Entangled polynomial codes for secure, private,
  and batch distributed matrix multiplication: Breaking the "cubic" barrier,''
  in \emph{2020 IEEE International Symposium on Information Theory (ISIT)},
  2020, pp. 245--250.

\bibitem{mital2020secure}
N.~Mital, C.~Ling, and D.~Gündüz, ``Secure distributed matrix computation
  with discrete fourier transform,'' \emph{IEEE Transactions on Information
  Theory}, pp. 1--1, 2022.

\bibitem{bitar2021adaptive}
R.~Bitar, M.~Xhemrishi, and A.~Wachter-Zeh, ``Adaptive private distributed
  matrix multiplication,'' \emph{arXiv preprint arXiv:2101.05681}, 2021.

\bibitem{hasircioglu2021speeding}
B.~Hasircioğlu, J.~Gómez-Vilardebó, and D.~Gündüz, ``Speeding up private
  distributed matrix multiplication via bivariate polynomial codes,'' in
  \emph{2021 IEEE International Symposium on Information Theory (ISIT)}, 2021,
  pp. 1853--1858.

\bibitem{danielgretchen}
H.~H. López, G.~L. Matthews, and D.~Valvo, ``Secure matdot codes: a secure,
  distributed matrix multiplication scheme,'' in \emph{2022 IEEE Information
  Theory Workshop (ITW)}, 2022, pp. 149--154.

\bibitem{machado}
R.~A. Machado, R.~G.~L. D’Oliveira, S.~E. Rouayheb, and D.~Heinlein, ``Field
  trace polynomial codes for secure distributed matrix multiplication,'' in
  \emph{2021 XVII International Symposium "Problems of Redundancy in
  Information and Control Systems" (REDUNDANCY)}, 2021, pp. 188--193.

\bibitem{9965858}
R.~A. Machado and F.~Manganiello, ``Root of unity for secure distributed matrix
  multiplication: Grid partition case,'' in \emph{2022 IEEE Information Theory
  Workshop (ITW)}, 2022, pp. 155--159.

\bibitem{9004505}
R.~G.~L. {D’Oliveira}, S.~{El Rouayheb}, and D.~{Karpuk}, ``Gasp codes for
  secure distributed matrix multiplication,'' \emph{IEEE Transactions on
  Information Theory}, pp. 1--1, 2020.

\bibitem{bivariatepol}
B.~Has{\i}rc{\i}oǧlu, J.~G{\'o}mez-Vilardeb{\'o}, and D.~G{\"u}nd{\"u}z,
  ``Bivariate polynomial codes for secure distributed matrix multiplication,''
  \emph{IEEE Journal on Selected Areas in Communications}, vol.~40, no.~3, pp.
  955--967, 2022.

\bibitem{umbertosecsharing}
U.~Martínez-Peñas, ``Communication efficient and strongly secure secret
  sharing schemes based on algebraic geometry codes,'' \emph{IEEE Transactions
  on Information Theory}, vol.~64, no.~6, pp. 4191--4206, 2018.

\bibitem{polycodes1}
Q.~Yu, M.~Maddah-Ali, and A.~S. Avestimehr, ``Polynomial codes: an optimal
  design for high-dimensional coded matrix multiplication,'' in \emph{Advances
  in Neural Information Processing Systems}, 2017, pp. 4403--4413.

\bibitem{polycodes2}
Q.~Yu, M.~A. Maddah-Ali, and A.~S. Avestimehr, ``Straggler mitigation in
  distributed matrix multiplication: Fundamental limits and optimal coding,''
  in \emph{2018 IEEE International Symposium on Information Theory
  (ISIT)}.\hskip 1em plus 0.5em minus 0.4em\relax IEEE, 2018, pp. 2022--2026.

\bibitem{pulkit}
S.~Dutta, M.~Fahim, F.~Haddadpour, H.~Jeong, V.~Cadambe, and P.~Grover, ``On
  the optimal recovery threshold of coded matrix multiplication,'' \emph{IEEE
  Transactions on Information Theory}, 2019.

\bibitem{pulkit2}
U.~Sheth, S.~Dutta, M.~Chaudhari, H.~Jeong, Y.~Yang, J.~Kohonen, T.~Roos, and
  P.~Grover, ``An application of storage-optimal matdot codes for coded matrix
  multiplication: Fast k-nearest neighbors estimation,'' in \emph{2018 IEEE
  International Conference on Big Data (Big Data)}.\hskip 1em plus 0.5em minus
  0.4em\relax IEEE, 2018, pp. 1113--1120.

\bibitem{fundamental}
S.~Li, M.~A. Maddah-Ali, Q.~Yu, and A.~S. Avestimehr, ``A fundamental tradeoff
  between computation and communication in distributed computing,'' \emph{IEEE
  Transactions on Information Theory}, vol.~64, no.~1, pp. 109--128, 2017.

\bibitem{nodehi2018limited}
H.~A. Nodehi and M.~A. Maddah-Ali, ``Limited-sharing multi-party computation
  for massive matrix operations,'' in \emph{2018 IEEE International Symposium
  on Information Theory (ISIT)}.\hskip 1em plus 0.5em minus 0.4em\relax IEEE,
  2018, pp. 1231--1235.

\bibitem{jia2019capacity}
Z.~Jia and S.~A. Jafar, ``On the capacity of secure distributed matrix
  multiplication,'' \emph{arXiv preprint arXiv:1908.06957}, 2019.

\bibitem{akbari2021secure}
H.~Akbari-Nodehi and M.~A. Maddah-Ali, ``Secure coded multi-party computation
  for massive matrix operations,'' \emph{IEEE Transactions on Information
  Theory}, vol.~67, no.~4, pp. 2379--2398, 2021.

\bibitem{kim2019private}
M.~Kim, H.~Yang, and J.~Lee, ``Private coded matrix multiplication,''
  \emph{IEEE Transactions on Information Forensics and Security}, vol.~15, pp.
  1434--1443, 2019.

\bibitem{zhu2021improved}
J.~Zhu, Q.~Yan, and X.~Tang, ``Improved constructions for secure multi-party
  batch matrix multiplication,'' \emph{IEEE Transactions on Communications},
  vol.~69, no.~11, pp. 7673--7690, 2021.

\bibitem{senthoor2022concatenating}
K.~Senthoor and P.~K. Sarvepalli, ``Concatenating extended css codes for
  communication efficient quantum secret sharing,'' \emph{arXiv preprint
  arXiv:2211.06910}, 2022.

\bibitem{cheraghchi2019nearly}
M.~Cheraghchi, ``Nearly optimal robust secret sharing,'' \emph{Designs, Codes
  and Cryptography}, vol.~87, no.~8, pp. 1777--1796, 2019.

\bibitem{matthews2021fractional}
G.~L. Matthews, A.~W. Murphy, and W.~Santos, ``Fractional decoding of codes
  from {H}ermitian curves,'' in \emph{2021 IEEE International Symposium on
  Information Theory (ISIT)}.\hskip 1em plus 0.5em minus 0.4em\relax IEEE,
  2021, pp. 515--520.

\bibitem{murphy2022codes}
A.~W. Murphy, ``Codes from norm-trace curves: local recovery and fractional
  decoding,'' Ph.D. dissertation, Virginia Tech, 2022.

\bibitem{Stichtenoth2009AlgebraicFF}
H.~Stichtenoth, \emph{Algebraic Function Fields and Codes}.\hskip 1em plus
  0.5em minus 0.4em\relax Springer Berlin, Heidelberg, 2009, vol.~2.

\bibitem{sagemath}
{The Sage Developers}, \emph{{S}ageMath, the {S}age {M}athematics {S}oftware
  {S}ystem ({V}ersion 9.4)}, 2021, {\tt https://www.sagemath.org}.

\end{thebibliography}

\end{document}